 \pgfplotsset{width=7cm}
\newcommand{\iid}{\textit{i.i.d.\ }}
\newcommand{\N}{\mathbb{N}}
\newcommand{\R}{\mathbb{R}}
\newcommand{\NN}{\mathcal{N}_M}
\newcommand{\D}{\mathcal{D}}
\renewcommand{\P}{\mathbf{P}}
\newcommand{\eps}{\varepsilon}
\renewcommand{\epsilon}{\varepsilon}
\newcommand{\fl}[1]{\lfloor #1 \rfloor}
\let\ddd\star
\renewcommand{\star}{\mbox{\large${\ddd}$}}
\renewcommand{\theenumi}{\alph{enumi}}
\renewcommand{\labelenumi}{\theenumi)}
\newtheorem{Theorem}{Theorem}
\newproof{Proof}{Proof}
\journal{ Mathematics and Computers in Simulation}
\begin{document}

\begin{frontmatter}



\title{Improving the Quality of Random Number Generators by Applying a Simple
  Ratio Transformation}


\author[CL]{Michael Kolonko\corref{cor1}
}
\ead{kolonko@math.tu-clausthal.de}

\author[BE]{Zijun Wu\corref{cor2}
}
\ead{zijunwu@bjut.edu.cn}

\author[CL]{Feng Gu
}
\ead{quagfeng@gmail.com}

\cortext[cor2]{Principal corresponding author}
\cortext[cor1]{Corresponding author}


\address[CL]{Institut f{\"u}r Angewandte Stochastik und Operations Research, Technical University of Clausthal,  Erzstr.\ 1,
D-38678 Clausthal-Zellerfeld, Germany}

\address[BE]{Beijing Institute for Scientific and Engineering Computing, School of Applied Mathematics and Physics, Beijing University of Technology,
Pingleyuan 100, 100124, Chaoyang, Beijing, China
}

\begin{abstract} 
  It is well-known that the quality of random number generators can often be
  improved by combining several generators, e.g.\ by summing or subtracting
  their results.  In this paper we investigate the \emph{ratio} of two random
  number generators as an alternative approach: the smaller of two input random
  numbers is divided by the larger, resulting in a rational number from $
  [0,1]$. 

  We investigate theoretical properties of this approach and show that it yields
  a good approximation to the ideal uniform distribution. To evaluate the
  empirical properties we use the well-known test suite \textsc{TestU01}. 
  We apply the ratio transformation to 
  moderately bad
  generators, 
  i.e.\ those that failed up to
  40\% of the tests 
  from the test battery  \textsc{Crush} of 
  \textsc{TestU01}. We show that more than half of them turn into very good
  generators that pass all tests of \textsc{Crush} and \textsc{BigCrush} from 
  \textsc{TestU01} 
   when the ratio transformation is applied.  In
  particular, generators based on linear operations seem to benefit from the
  ratio, as this breaks up some of the unwanted regularities in the input
  sequences.

  Thus the additional effort to produce a second random number and to calculate
  the ratio allows to increase the quality of available random number generators.
\end{abstract}

\begin{keyword}


random number generation \sep combination of random generators \sep ratio of uniform 
random variables 
\sep empirical testing of random generators  
\end{keyword}

\end{frontmatter}


\section{Introduction}

Stochastic simulation is an important tool to study the behavior of complex
stochastic systems that cannot be 
mathematically analyzed,
as it is often
the case e.g.\ in network models for traffic, communication or production. The
basis of these simulation tools is a (pseudo-) random input provided by a  random
number generator.

A random number generator (RNG for short) is an algorithm  that produces sequences
of numbers that, viewed as an observation of a random experiment, can be modeled
mathematically by a sequence of independent, identically distributed random
variables (\iid rvs). As basis for most simulations, these rvs should have the
uniform distribution $ U(0,1)$ on the interval $ [0,1]$.

Of course, a deterministic algorithm can only approximate this mathematical
model. Its deviance from the model is used to measure the quality of the
generator. A lot of investigations have been made into that direction for
different types of generators proposed over time, see e.g.\ \cite{knuth1998art}
or \cite{l2007testu01} for a survey. In particular, a large number of empirical
tests have been developed to assess the quality of RNGs.  

In this paper we are going to show that one can transform many simple,
moderately good generators into statistically excellent ones using the ratio of
their output. We show this with the test suite \textsc{TestU01} from
\cite{IP-TestU01}, which has become a standard for RNG testing.

A general framework for RNGs producing numbers in the interval $ [0,1]$ is
described in \cite{l1988efficient}. It consists of a finite set $ S$ of internal
\emph{states}, a function $ f:S\to S$ describing the \emph{recursion}, a
\emph{seed state} $ s_0$ and an \emph{evaluation function} $ g: S\to
[0,1]$. Starting with the seed state $ s_0$, a sequence of states $ (s_i)_{i\ge
  0}$ is constructed using the recursion $ s_{i+1}:=f(s_i), i\ge 0$. The random
numbers returned are $ u_i:=g(s_i), i\ge 0$.

Well-known examples for RNGs are the linear congruential generators (LCG) of
order $ 1$. Here,  $ f(s):=(as+c)\mod M$ \ for some constants $ a,c,M\in \N$ and $
S:=\{0,1 ,\ldots, M-1\}$. The integers $ s$ that serve as internal states are
turned into output values from $ [0,1]$ by the evaluation function $ g(s):=s/M$.  

In many popular generators the internal state consists of one or more integers
from a bounded set $\NN:=\{0,1 ,\ldots, M-1\}$. The evaluation step then selects
one of these integers and divides it by $ M$ as in the example above.  This is
the case e.g.\ in LCGs of higher order, in many so-called Lagged Fibonacci
generators and more complex combinations of RNGs as given in
\cite{knuth1998art}, \cite{l1999good}, \cite{l2001software} or
\cite{matsumoto1998mersenne}. This way of producing random numbers from $ [0,1]$
using a division by a constant $ M$ will be referred to in the sequel as
the \emph{direct approach}. 


In this paper, we assume that we are given a RNG that internally produces
integers $ x_i\in \NN$ as above, but the final evaluation step 
is a more complex transformation that uses two consecutive values $
x_{2i},x_{2i+1}\in \NN$ and returns
\begin{equation}\label{eq:Def1}
u_i :=  \frac{ \min \{x_{2i},x_{2i+1} \} }{ \max \{ x_{2i},x_{2i+1} \} }
\end{equation}
where the cases $x_{2i}\cdot x_{2i+1}=0$ and $x_{2i}= x_{2i+1}$ will be
considered in detail 
in Subsection \ref{ssec:cdf}.
We shall call \eqref{eq:Def1} the \textit{ratio
  transformation} and $ x_1,x_2 ,\ldots $ its \emph{input sequence} from the
\emph{base RNG}. To our knowledge, this type of ratio transformation was first
used in \cite{morgenstern2007uniform}.

To make the comparison between
the ratio transformation and the direct approach fairer, we will consider an
extension of this   approach that also uses two random numbers and returns
\begin{equation*}
w_i:=  \frac{x_{2i}}{M}+\frac{x_{2i+1}}{M^2}.
\end{equation*}
We call this the \emph{direct-2 approach}.  Note that our ratio
transformation is computationally slightly more complex than direct-2.


We start the paper with a theoretical analysis of the ratio transformation. We
show that the cumulative distribution function (cdf) of the ratio transformation
of two independent discrete random variables $ X_1,X_2$, that are both uniformly
distributed over $ \NN$, approximates the continuous uniform distribution on $
[0,1]$.  This approximation, though not as close as by the direct
  approaches, yields values that are much less regularly distributed than those
  from the direct approaches. We believe that this is the reason for the 
 superior empirical behaviour.

The empirical quality of the ratio transformation is tested using the standard
test 
batteries 
\textsc{Crush} and \textsc{BigCrush} from
\cite{l2007testu01}. Typically, simple RNGs fail many of these tests with the
direct approach. However, if the ratio transformation is applied, the number
of tests failed is reduced considerably. In particular, we tested the RNGs from
\cite{l2007testu01} that failed up to 40\% of the 144 tests of
\textsc{Crush}. When the ratio transformation is applied to their output, about half
of them passed \emph{all} tests of \textsc{Crush}, whereas an application of the
direct-2 approach could hardly improve their performance. This shows that the
ratio transformation is able to turn simple, moderately good RNGs into excellent
ones.  First observations in that direction were reported in
\cite{morgenstern2007uniform}.

The paper is organized as follows: in Section \ref{sec:Theoretical}, we
investigate theoretical properties of \eqref{eq:Def1} under the assumption that
the inputs are from ideal RNGs. In Section \ref{sec:Exp} we report on empirical
tests with the \textsc{Crush}/\textsc{BigCrush} test 
battery 
from
\cite{l2007testu01} with different types of base RNGs, each time comparing the
direct approaches with the ratio transformation. Finally, we give a short
conclusion.

\section{The Mathematical Model}
\label{sec:Theoretical}
\subsection{The Cumulative Distribution Function of the Ratio Transformation}
\label{ssec:cdf}

  In this Section we investigate the mathematical model of the ratio
  \eqref{eq:Def1} where we replace the input sequence by random variables with a
  uniform distribution. We prove that the ratio transformation preserves
  uniformity at least approximately.

  For completeness, we first give a simple theorem 
  which states
  that the ratio of
  two \emph{continuous} $ U(0,1)$-distributed independent random variables is
  again $ U(0,1)$-distributed. 
  Then,
   we will study the more complicated
  situation where the input is from a \emph{discrete} uniform distribution.

  Note that for a continuous random variable $ U$ with distribution $ U(0,1)$,
  its cdf is $ F_U(t)=\P(U\le t)=t$ and its density
  $ f_U(t)=1$ for $t\in [0,1]$.
\begin{Theorem}
  \label{theo:RatioCont}
Let $ U_1,U_2$ be two independent and identically  distributed  random
variables with distribution $ U(0,1)$ and define the ratio
\begin{equation}\label{eq:DefRatioCont}
Z:=  \frac{\min\{U_1,U_2\}}{\max\{U_1,U_2\}}
\end{equation}
where we put $ \frac{0}{0}=0$. Then $ Z$ has distribution $ U(0,1)$, too.
\end{Theorem}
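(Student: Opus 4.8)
The strategy is to compute the cumulative distribution function $F_Z(t)=\P(Z\le t)$ for $t\in[0,1]$ directly and show it equals $t$. The event $\{Z\le t\}$ splits naturally according to which of $U_1,U_2$ is larger, and by symmetry (since $U_1,U_2$ are i.i.d.) the two contributions are equal. On the set $\{U_1\le U_2\}$ we have $Z=U_1/U_2$, so $\{Z\le t\}\cap\{U_1\le U_2\}=\{U_1\le t\,U_2\}\cap\{U_1\le U_2\}$, and since $t\le 1$ the condition $U_1\le tU_2$ already implies $U_1\le U_2$; hence the event reduces to $\{U_1\le t\,U_2,\ U_1\le U_2\}=\{U_1\le tU_2\}$. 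Thus $F_Z(t)=2\,\P(U_1\le t\,U_2)$, and the tie/zero cases ($U_1=U_2$ or $U_1U_2=0$) have probability zero in the continuous setting and can be ignored.

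Next I would evaluate $\P(U_1\le t\,U_2)$ by integrating the joint density, which is $1$ on the unit square:
\begin{equation*}
\P(U_1\le t\,U_2)=\int_0^1\!\!\int_0^1 \1\{u_1\le t\,u_2\}\,du_1\,du_2
=\int_0^1 \min\{t\,u_2,1\}\,du_2=\int_0^1 t\,u_2\,du_2=\frac{t}{2},
\end{equation*}
using again that $t\le 1$ so $t\,u_2\le 1$ throughout. Therefore $F_Z(t)=2\cdot\frac{t}{2}=t$ for all $t\in[0,1]$, which is exactly the cdf of the $U(0,1)$ distribution, proving the claim.

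There is essentially no hard step here; the only thing to be careful about is the symmetry reduction (making sure the two halves $\{U_1\le U_2\}$ and $\{U_1> U_2\}$ genuinely contribute equally and that their overlap $\{U_1=U_2\}$ is null), and the observation that $t\le 1$ lets one drop the constraint $\min\le\max$ and also avoid the truncation at $1$ inside the inner integral. An alternative, essentially equivalent route is the change of variables $(U_1,U_2)\mapsto(Z,M)$ with $M=\max\{U_1,U_2\}$, computing the joint density and marginalizing out $M$; I would prefer the direct integral above since it is shorter and sidesteps Jacobian bookkeeping.
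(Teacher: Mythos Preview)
Your proof is correct and follows essentially the same route as the paper: after discarding the null events $\{U_1U_2=0\}$ and $\{U_1=U_2\}$, use the i.i.d.\ symmetry to write $\P(Z\le t)=2\,\P(U_1\le tU_2)$ and evaluate this as $2\int_0^1 tu\,du=t$. Your write-up is slightly more explicit about why the constraint $U_1\le U_2$ can be dropped once $t\le 1$, but the argument is otherwise identical.
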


\begin{proof}
  As   $ \P(U_1\cdot U_2=0)=\P(U_1=U_2)=0$ we may exclude these two cases and obtain
  \begin{align*}
    \P(Z\le t)&= \P(\frac{\min\{U_1,U_2\}}{\max\{U_1,U_2\}}\le t) = 2\P(0<U_1\le
    tU_2)\\
    &= 2\int_0^1\P(U_1\le tu)\; du = 2\int_0^1 tu\;du = 2t/2 =t,
  \end{align*}
for all $ t\in [0,1]$. 
\end{proof}

Now we 
assume that the input is from two \iid random variables $ X_1,X_2$
with the \emph{discrete} uniform distribution $ U(\NN)$ on $ \NN=\{0,1 ,\ldots,
M-1\}$, i.e.\ 
\begin{equation}\label{eq:iid}
 \P(X_1=k,X_2=l)=\frac{1}{M^2}\quad \text{  for all } k,l\in \NN.
\end{equation}
Then, we
have $ \P(X_1\cdot X_2=0)\approx 2/M$ and $\P( X_1=X_2>0)\approx
1/M.$
A straightforward
extension of the ratio \eqref{eq:Def1} would choose $ 0$ as return value in case
$ X_1\cdot X_2=0$ and return value $ 1$ if $ X_1=X_2>0,$ 
see \cite{morgenstern2007uniform}. 
As it is a common
practice with RNGs to completely avoid $ 0,1$ as return values, we introduce
two replacement values $\epsilon_0$ and $\epsilon_1$ with $ 0<\epsilon_0<1-\epsilon_1<1,$ and let them 
appear with almost equal
small probability ($ \approx 1.5/M$). We therefore define
\begin{equation}\label{eq:DefEps}
\begin{split}
\eps_0&:= \frac{M-1+\fl{M/2}}{2M^2}\\
\eps_1&:=\frac{2M-1-\fl{M/2}}{2M^2}
  \end{split}
\end{equation}
where $ \fl{a}$ is the largest integer less or equal $ a\in \R$.  Note that $
0<\epsilon_0, \epsilon_1<1/(M-1)$ and that for large $ M,$ $
\epsilon_0\approx \epsilon_1\approx 0.75/M$. The motivation for this
particular choice will become clear from Theorem \ref{theo:KSDistance} below.

In the sequel we will use the following ratio transformation of
two inputs $ x_1,x_2\in \NN$
\begin{equation}\label{eq:DefY}
h(x_1,x_2):=
\begin{cases}
   \eps_0 & \text{if } \  x_1=0<x_2 \ \text{ or }\ 0\le x_1=x_2\le \fl{M/2}-1,\\ 
\frac{ \min\{x _{1}, x _{2}\}}{ \max\{x _{1}, x 	_{2}\}} &\text{if } \ x
_{1} \cdot x _{2} > 0 \ \text{ and }  \   x _{1} \neq x _{2}, \\ 
  1-\eps_1 & \text{if } \  x_2=0<x_1 \quad \text{ or }\quad  x_1=x_2 \ge \fl{M/2},
\end{cases}
\end{equation}
i.e.\ we split the cases $ x_1\cdot x_2=0$ and $ x_1=x_2$ more or less evenly
between the two values $ \epsilon_0$ and $1-\epsilon_1$.    
The next Theorem gives the cdf of the discrete random variable $ h(X_1,X_2)$. We
show in Theorem \ref{theo:KSDistance} that
this is a close approximation of the cdf of $ U(0,1)$.

\begin{Theorem}\label{Theo:cdfRatioDiscr}
  Let $ X_1,X_2$ be \iid $ U(\NN)$-distributed random variables and $
  Y=h(X_1,X_2)$. Then the cdf of $ Y$ is given by
\begin{equation}\label{eq:cdfYDiscr}
\P(Y\le t) = \begin{cases}
0 &\text{ if } \quad 0\le t<\eps_0\\
  2\epsilon_0 &\text{ if } \quad \eps_0\le t <\frac{1}{M-1}\\
2\epsilon_0+ \frac{2}{M^2} \cdot\sum_{k=1}^{M-1} \lfloor tk\rfloor& \text{
  if } \quad \frac{1}{M-1}\le t < 1-\eps_1\\
1 & \text{ if }\quad  1-\eps_1\le t \le 1
\end{cases}
\end{equation}
\end{Theorem}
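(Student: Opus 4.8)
The plan is to compute the cdf of $Y=h(X_1,X_2)$ directly by partitioning the sample space $\NN\times\NN$ according to the three branches of the definition \eqref{eq:DefY} and counting lattice points. First I would handle the range $0\le t<\eps_0$: since the smallest possible value of $Y$ is $\eps_0$ (note $\eps_0<\eps_0<1/(M-1)$ and the genuine ratios $\min/\max$ are all $\ge 1/(M-1)$), we have $\P(Y\le t)=0$ there. Next, for $\eps_0\le t<1/(M-1)$, the only outcomes with $Y\le t$ are those landing on the value $\eps_0$; by \eqref{eq:DefY} these are the pairs $(0,x_2)$ with $x_2>0$ (there are $M-1$ of them) together with the diagonal pairs $(k,k)$ for $0\le k\le \fl{M/2}-1$ (there are $\fl{M/2}$ of them), for a total of $M-1+\fl{M/2}$ outcomes, each of probability $1/M^2$. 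This equals $2\eps_0$ by the very definition \eqref{eq:DefEps} of $\eps_0$, which is the point of that choice.

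The main work is the middle range $1/(M-1)\le t<1-\eps_1$. Here $\P(Y\le t)=2\eps_0+\P(Y\in\{\text{genuine ratios}\}, Y\le t)$, since the mass $2\eps_0$ at $\eps_0$ is already collected and the mass at $1-\eps_1$ is not yet reached. So I need to count pairs $(x_1,x_2)$ with $x_1x_2>0$, $x_1\ne x_2$, and $\min\{x_1,x_2\}/\max\{x_1,x_2\}\le t$. By symmetry this is $2$ times the number of pairs with $0<x_1<x_2\le M-1$ and $x_1\le t x_2$, i.e.\ $x_1\le\fl{tx_2}$. Summing over $x_2$ from $1$ to $M-1$, the inner count is the number of integers $x_1$ with $1\le x_1\le\min\{\fl{tx_2},x_2-1\}$; since $t<1$ we have $\fl{tx_2}\le x_2-1$ automatically, so the inner count is exactly $\fl{tx_2}$. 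This gives $\P(Y\le t)=2\eps_0+\frac{2}{M^2}\sum_{k=1}^{M-1}\fl{tk}$ after relabelling $x_2$ as $k$. One should double-check the boundary: at $t=1/(M-1)$ the sum $\sum_k\fl{tk}$ has only the term $k=M-1$ contributing $\fl{(M-1)/(M-1)}=1$, giving $2\eps_0+2/M^2$, which should match (and does, since $2/M^2$ is the mass of the single pair-orbit $\{(1,M-1),(M-1,1)\}$).

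Finally, for $1-\eps_1\le t\le1$ every outcome satisfies $Y\le t$ (the largest value $Y$ takes is $1-\eps_1$), so $\P(Y\le t)=1$; equivalently one checks $2\eps_0+\frac{2}{M^2}\sum_{k=1}^{M-1}\fl{k}$ (the total ratio mass up to but not including $1-\eps_1$) plus the mass $\P(Y=1-\eps_1)$ equals $1$, which is just the statement that probabilities sum to one and pins down $\eps_1$ via \eqref{eq:DefEps}: indeed $\P(Y=1-\eps_1)=\frac{1}{M^2}(M-1+M-\fl{M/2})=2\eps_1$, consistent with the $\eps_0$ count since $(M-1+\fl{M/2})+(2M-1-\fl{M/2})=3M-2=$ total count of the degenerate outcomes $\{x_1x_2=0\}\cup\{x_1=x_2\}$. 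The only real obstacle is bookkeeping with the floor functions and the $\fl{M/2}$ split of the diagonal — making sure the counts $M-1+\fl{M/2}$ and $2M-1-\fl{M/2}$ exactly reproduce $2M^2\eps_0$ and $2M^2\eps_1$, and that no pair is double-counted where the $x_1x_2=0$ case meets the $x_1=x_2$ case (only $(0,0)$, which by \eqref{eq:DefY} is assigned to $\eps_0$ and must be counted once). Everything else is a routine lattice-point count exploiting the $x_1\leftrightarrow x_2$ symmetry.
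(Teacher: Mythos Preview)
Your proof is correct and follows essentially the same route as the paper's own proof: both compute $\P(Y=\eps_0)=2\eps_0$ by counting the $M-1+\fl{M/2}$ outcomes in the first branch of \eqref{eq:DefY}, then handle the middle range via the symmetry reduction to $2\,\P(0<X_1\le tX_2)=\frac{2}{M^2}\sum_{k=1}^{M-1}\fl{tk}$, and finish by observing that the remaining mass sits at $1-\eps_1$. (Minor typo: in your first paragraph ``$\eps_0<\eps_0<1/(M-1)$'' should read ``$0<\eps_0<1/(M-1)$''.)
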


\begin{proof}
  The smallest nonzero value that $\frac{\min\{X_1,X_2\}}{\max\{X_1,X_2\}} $ may attain is $ \frac{1}{M-1}$ and
  similarly, $\frac{M-2}{M-1}=1-\frac{1}{M-1}$ is the largest value smaller than
  $ 1.$ 
  We have $ 0<\epsilon_0< \frac{1}{M-1}$ and $
  \frac{M-2}{M-1}<1-\epsilon_1$. Therefore, using \eqref{eq:iid}
\begin{align}
  \P(Y\le \epsilon_0)&= \P(Y=\epsilon_0)=\P\big( X_1=0<X_2 \ \text{ or
  }\ 0\le X_1=X_2\le \fl{M/2}-1\big)\notag\\
  &= \P\big( X_1=0<X_2\big) + \P\big(0\le X_1=X_2\le \fl{M/2}-1\big)\notag\\
  &= \sum_{k=1}^{M-1}\P(X_1=0, X_2=k)+ \sum_{k=0}^{\fl{M/2}-1}\P(X_1= X_2=k)\notag\\
  &= \frac{M-1+ \fl{M/2}}{M^2} = 2\epsilon_0. \label{eq:ProbEps0}
\end{align} 
Similarly, one may show
\begin{equation}\label{eq:ProbEps1}
\P(Y=1-\epsilon_1)=2\epsilon_1.
\end{equation}

For $ t\in \big[\frac{1}{M-1},\frac{M-2}{M-1}\big]$ we have, using the symmetry
of the joint distribution of $X_1,X_2$,
\begin{align*}
  \P(Y\le t)&= \P(Y=\epsilon_0)+\P( Y\le t, X_{1} \cdot X _{2} > 0, X _{1} \neq
  X _{2})\\
  &= 2\epsilon_0+\P(\frac{\min\{X_1,X_2\}}{\max\{X_1,X_2\}} \le t,\ X_1\cdot
  X_2> 0)\\
  &= 2\epsilon_0+ 2\cdot \P(X_1 \le tX_2,\  X_1> 0)\\
  &= 2\epsilon_0+ 2 \sum_{k=1}^{M-1}\P(X_2=k)\P(X_1\le kt)= 2\epsilon_0+
  \frac{2}{M^2}\sum_{k=1}^{M-1}\fl{tk},
\end{align*}
here $ t<1$ implies $ X_1\not=X_2$, which proves the Theorem.
\end{proof}


\subsection{The Deviation of $ Y$ from the Uniform Distribution}  
We want to measure the quality of the ratio transformation $ Y$ by the maximal
deviation of its cdf $ F_Y(t):=\P(Y\le t)$ as given in Theorem
\ref{Theo:cdfRatioDiscr} from the cdf of a  $ U(0,1)$-distributed random
variable $ U$ with $ F_U(t)=t, t \in [0,1]$. This difference
\[ 
\Delta_Y:=\sup_{t\in [0,1]}|F_Y(t)-t|
\]   
is also called Kolmogoroff-Smirnov-distance (KS-distance).

\begin{figure}[htb]
  \centering
  \includegraphics[width=6.3cm]{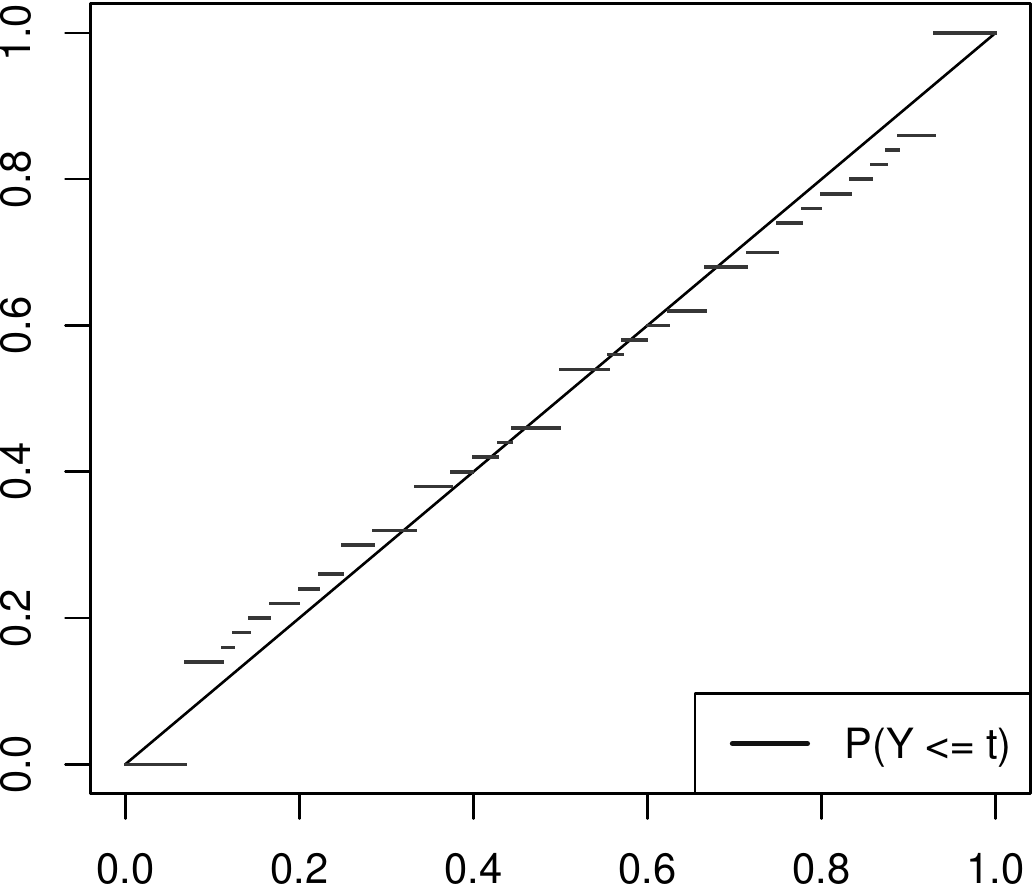}\quad
  \caption{The cdf of the ratio transformation for $ M=10$. The diagonal
    corresponds to the cdf of $ U(0,1)$.}
  \label{fig:CDFStd-Eps}
\end{figure}

Theorem \ref{theo:KSDistance} shows, that the distribution of the ratio
transformation $ Y$ approaches the true $ U(0,1)$ distribution in KS-distance as
$ M\to \infty $. Moreover, the largest deviation of $ F_Y$ from $ F_U$
  takes place at the two artificial extreme points $ \epsilon_0,1-\epsilon_1$.


\begin{Theorem}
  \label{theo:KSDistance}
Let $ Y$ and  $ \Delta_Y$ be defined as above. Then
\[
  \Delta_Y \le \max\{\epsilon_0,\epsilon_1\} \approx \frac{3/4}{M}.
\] 

\end{Theorem}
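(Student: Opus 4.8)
The plan is to check $|F_Y(t)-t|\le\max\{\eps_0,\eps_1\}$ for every $t\in[0,1]$ directly from Theorem~\ref{Theo:cdfRatioDiscr}. Since $\eps_1-\eps_0=\frac{M-2\fl{M/2}}{2M^2}\ge 0$ we have $\max\{\eps_0,\eps_1\}=\eps_1$ (with equality iff $M$ is even), and $M\eps_0,M\eps_1\to 3/4$ as $M\to\infty$ explains the displayed approximation. I treat the four ranges of \eqref{eq:cdfYDiscr} in turn. On $[0,\eps_0)$ we have $F_Y\equiv 0$, so $|F_Y(t)-t|=t<\eps_0$; on $[1-\eps_1,1]$ we have $F_Y\equiv 1$, so $|F_Y(t)-t|=1-t\le\eps_1$; on $[\eps_0,\frac1{M-1})$ we have $F_Y\equiv 2\eps_0$, so $|F_Y(t)-t|=|2\eps_0-t|$, whose supremum over that interval is $\max\{\eps_0,\,|2\eps_0-\frac1{M-1}|\}$, and this is $\le\eps_1$ using only the elementary facts $0<\eps_0<\frac1{M-1}\le 2\eps_0+\eps_1$ (all immediate from \eqref{eq:DefEps}). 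So the whole statement reduces to the middle range $I:=[\frac1{M-1},\,1-\eps_1)$.

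On $I$, writing $\fl{tk}=tk-\{tk\}$ and using $\sum_{k=1}^{M-1}k=\frac{(M-1)M}{2}$, formula \eqref{eq:cdfYDiscr} becomes
\[
F_Y(t)-t\;=\;2\eps_0-\frac tM-\frac{2}{M^2}\,S(t),\qquad S(t):=\sum_{k=1}^{M-1}\{tk\}.
\]
As $F_Y$ is a right-continuous step function and $t\mapsto t$ is continuous, on each interval on which $F_Y$ is constant $|F_Y(t)-t|$ is maximised at an endpoint; hence $\sup_I|F_Y(t)-t|$ equals the maximum, over the jump points $p/q$ of $F_Y$ lying in $I$ (rationals in lowest terms with $q\le M-1$), of $\max\{|F_Y(p/q)-p/q|,\ |F_Y((p/q)^-)-p/q|\}$. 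At such a point $F_Y$ jumps up by $\P(Y=p/q)=\frac{2}{M^2}\fl{(M-1)/q}$ and $S$ jumps down by $\fl{(M-1)/q}$ (precisely the indices $k$ with $q\mid k$). Substituting the displayed identity at $p/q$ and at $(p/q)^-$ and collecting constants via \eqref{eq:DefEps}, the bound $|F_Y-t|\le\eps_1$ near every such $p/q$ reduces to the pair of inequalities
\[
S(p/q)\ \ge\ \frac{3\fl{M/2}-1}{4}-\frac{Mp}{2q},\qquad
S(p/q)+\Big\lfloor\tfrac{M-1}{q}\Big\rfloor\ \le\ \frac{4M-3+\fl{M/2}}{4}-\frac{Mp}{2q},
\]
the first yielding $F_Y-t\le\eps_1$ and the second (thanks to the jump term) yielding $F_Y-t\ge-\eps_1$, both at $p/q$ and just below it.

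To estimate $S(p/q)$ I will use that $k\mapsto pk\bmod q$ is periodic in $k$ with period $q$ and, since $\gcd(p,q)=1$, runs through $0,1,\dots,q-1$ exactly once per period. Writing $M-1=aq+r$ with $a=\fl{(M-1)/q}$ and $0\le r<q$, one gets the exact formula
\[
S(p/q)\;=\;a\cdot\frac{q-1}{2}\;+\;\frac1q\sum_{j=1}^{r}(pj\bmod q),
\]
where the final sum is a sum of $r$ distinct elements of $\{1,\dots,q-1\}$ and therefore lies between $\frac{r(r+1)}{2}$ and $rq-\frac{r(r+1)}{2}$. Inserting these two bounds reduces the two required inequalities to elementary inequalities in $M,p,q,r$ and $\fl{M/2}$ alone.

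This last verification is the main obstacle. The coarse bounds $0\le S(p/q)\le M-1$, or $\{tk\}\le tk$, do \emph{not} suffice on all of $I$ (the per-period value $a\cdot\frac{q-1}{2}$ together with the finer estimate of the residual sum is genuinely needed, e.g.\ around $q\approx (M-1)/3$), so the argument must retain $a$, $r$ and $p$, and I expect to split according to whether $q$ is small (so $a$ is large) or close to $M$ (so $a\in\{1,2,3\}$), and then according to the size of $r$. The inequalities come out true precisely because of the choice \eqref{eq:DefEps}; moreover they hold with strict slack at every $p/q$, so that the value $\max\{\eps_0,\eps_1\}$ is attained only at the two artificial atoms $\eps_0$ and $1-\eps_1$ — exactly the remark made just before the theorem. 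A handful of the smallest values of $M$ may have to be inspected by hand.
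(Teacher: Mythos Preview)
Your setup is essentially identical to the paper's: reduce to jump points, write the sum $\sum_{k=1}^{M-1}\lfloor tk\rfloor$ via the fractional-part sum $S(p/q)$, exploit that $k\mapsto pk\bmod q$ permutes the residues over each full period to obtain $S(p/q)=a\cdot\frac{q-1}{2}+\frac1q\sum_{j=1}^{r}(pj\bmod q)$, and bound the last sum between $\frac{r(r+1)}{2}$ and $rq-\frac{r(r+1)}{2}$. Where you differ is in the finish: you anticipate a case split on the size of $q$ (or $a$) and of $r$, whereas the paper avoids any case distinction by applying the AM--GM inequality $x+y\ge 2\sqrt{xy}$ to the combination $q\,a(a+1)+\dfrac{M^2}{q}$ that emerges after inserting those bounds. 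This yields uniform estimates $\Delta_{\mathrm{lower}}=M(\sqrt2-1)$ and $\Delta_{\mathrm{upper}}=M(5/2-\sqrt2)$, hence the single closed-form bound $\frac{7/4-\sqrt2}{M}+\frac{2}{M^2}\approx 0.336/M$, which is strictly below $\max\{\eps_0,\eps_1\}\approx 0.75/M$ for all $M$ --- no separate treatment of small $M$ is needed. So your plan would work, but the AM--GM step replaces the entire case analysis you were bracing for.
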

\begin{proof}
  First note that for the KS-distance between an increasing step function $ F$  with jump
  points $ t_i, i=1 ,\ldots, L$, and the continuous cdf $F_U(t)=t$  on $
  [0,1]$, it is sufficient to check the distances at the jump points, more
  precisely with $ t_0:=0$
  \begin{equation}\label{eq:KSDistance}
\sup_{t\in[0,1]}|F(t)-t|= \max_{i=1 ,\ldots,
  L}\big\{|F(t_{i})-t_{i}|,|F(t_{i-1})-t_i|\big\}.
  \end{equation} 
 Determining $ \Delta_Y$ turns out to be quite involved and we will only
  sketch the most important steps here. First note that $ F_Y$ has jumps at $
  \epsilon_0, 1-\epsilon_1$ and the possible values of the ratio expression $
  k/l\in \D$ where
  \begin{equation}\label{eq:DefD}
\D:=\Big\{ \frac{k}{l}\mid k,l\in \NN, 1 \le  k<l \text{ and } k,l \text{ coprime}\Big\}.  
  \end{equation}
  From \eqref{eq:ProbEps0} and \eqref{eq:ProbEps1} we see that the maximal
  deviation between $ F_Y$ and $ F_U$ outside of $ \D$ is $
  \max\{\epsilon_0,\epsilon_1\}$.  For the proof that $
  \max\{\epsilon_0,\epsilon_1\}$ is also the bound for $ \Delta_Y$, it only
  remains to show that
\begin{equation}\label{eq:DiffonD}
\max\Big\{|F_Y(\epsilon_0)-\frac{1}{M-1}|, \ \sup_{t\in
  [\frac{1}{M-1},\frac{M-2}{M-1}]}|F_Y(t)-t|\Big\} \le \max\{\epsilon_0,\epsilon_1\}.
\end{equation}
Note that here the first term has to be included to check the jump of the cdf
from the left at the minimal value $ \frac{1}{M-1}$ of $ \D\subset
\big[\frac{1}{M-1},\frac{M-2}{M-1}\big]$. The long and technical proof of
\eqref{eq:DiffonD} is sketched in the Appendix.

\end{proof}
As the KS-distance $ \Delta_Y$ is determined by the probability
of the two additional points $ \epsilon_0,1-\epsilon_1$, $ \Delta_Y$ could be
further 
lowered 
by replacing $ \epsilon_0$ with $ \epsilon_0', \epsilon_0''\in (0,
\frac{1}{M-1}),$ 
each with half the probability of $ \epsilon{_0}$ and
therefore smaller jumps of $ F_Y$. Similarly, one could replace $
1-\epsilon_1$. Then $ \Delta_Y$ would be dominated by the leftmost term 
of inequality \eqref{eq:AppendixA} of Theorem \ref{theo:KSAppend} in the Appendix,
which is again bounded by 
$(7/4-2\sqrt{2})/M+2/M^2=0.6715/M+2/M^2$
as is shown in the Appendix. 
Since 
this is
not a real improvement, we stick to the simpler form of the ratio transformation as
given in definition \eqref{eq:DefY}.

\begin{figure}[htb]
  \centering
   \includegraphics[width=6.3cm]{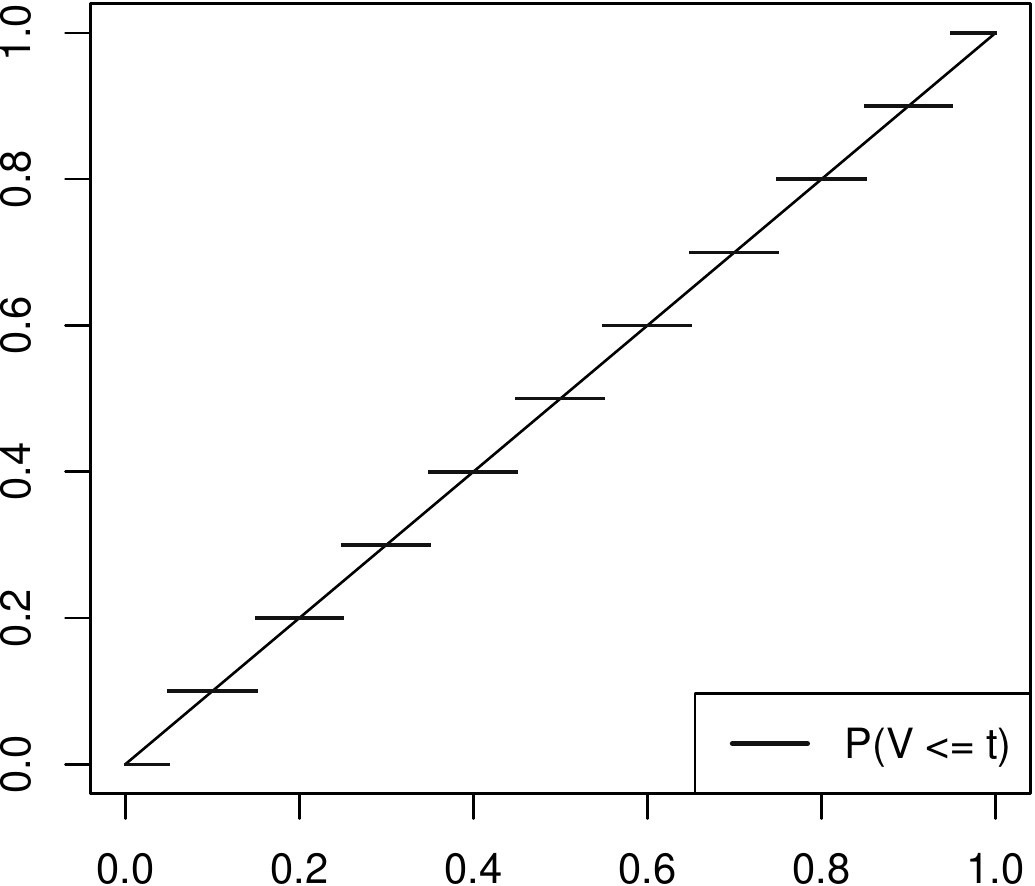}\quad
  \includegraphics[width=6.3cm]{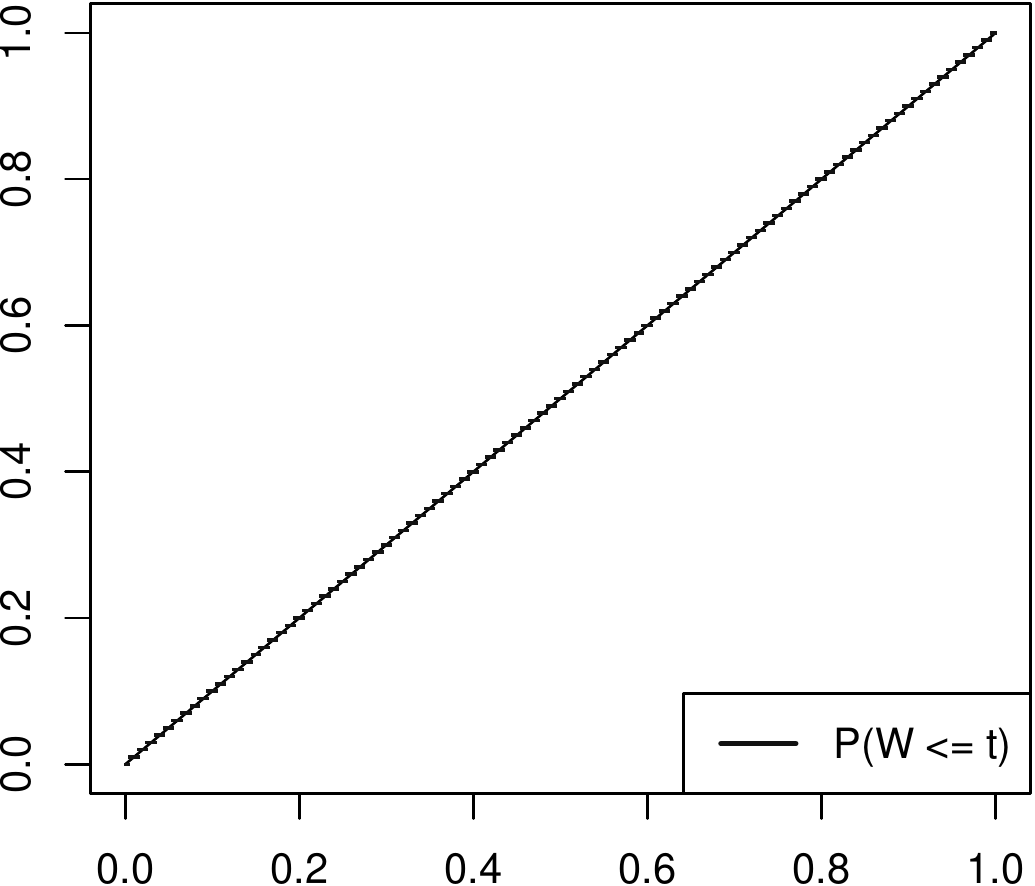}
  \caption{The cdf of the direct approach on the left and the direct-2
    approach on the right for $ M=10$. The diagonal corresponds to the cdf of $
    U(0,1)$.}
  \label{fig:CDFStds}
\end{figure}

Next we want to compare the maximal deviation $ \Delta_Y$ with the deviation
under the two direct approaches. To be more formal, let $ X_1,X_2$ be \iid, $
U(\NN)$-distributed rvs. Then  
\begin{align}
  V&:= \frac{X_1}{M}+\frac{1}{2M} \quad \text{ and} \label{eq:DefStd1}\\
  W&:= \frac{X_{1}}{M}+\frac{X_{2}}{M^2}+\frac{1}{2M^2} \label{eq:DefStd2},  
\end{align}
where we 
use 
a slight shift of the results in both cases again to avoid the value $ 0$.
Then $ V$ corresponds to the simple direct method and $ W$ to direct-2
approach.  A first impression of the cdfs of these variables is given in Figure
\ref{fig:CDFStds}.
We  define
\[
  \Delta_V:=\sup_{t\in [0,1]}|\P(V\le t)-t|,\qquad
  \Delta_W:=\sup_{t\in [0,1]}|\P(W\le t)-t|.
\]
As would be expected from Figure \ref{fig:CDFStds}, at least $ \Delta_W$ is
  much smaller than $ \Delta_Y$. The next Theorem gives simple  bounds on $ \Delta_V,\Delta_W$.
\begin{Theorem}\label{theo:DiffStds}
  With $ V,W$ and $ \Delta_V,\Delta_W$ as above we have
  \[ 
  \Delta_V\ \le \ \frac{1}{2M}\quad \text{ and }\quad \Delta_W\ \le\ \frac{1}{2M^2}  
    \]
\end{Theorem}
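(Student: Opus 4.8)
The plan is to observe that, unlike the ratio transformation, the two direct variables $V$ and $W$ defined in \eqref{eq:DefStd1} and \eqref{eq:DefStd2} are distributed uniformly over an \emph{equispaced} grid of points, namely the midpoints of the subintervals of a regular partition of $[0,1]$. Once this is seen, both bounds follow from a single elementary computation, the bound for $\Delta_W$ being the one for $\Delta_V$ with $M$ replaced by $M^2$.

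For $\Delta_V$: since $X_1\sim U(\NN)$, the variable $V=X_1/M+1/(2M)$ takes the $M$ values $\frac{2i-1}{2M}$, $i=1,\ldots,M$, each with probability $1/M$; these are precisely the midpoints of the intervals $[\frac{i-1}{M},\frac{i}{M}]$. Hence $F_V(t):=\P(V\le t)$ is an increasing step function with jump points $t_i=\frac{2i-1}{2M}$ and $F_V(t_i)=i/M$, and I would invoke the reduction \eqref{eq:KSDistance} (already used in the proof of Theorem~\ref{theo:KSDistance}) to reduce $\Delta_V$ to the finitely many quantities $|F_V(t_i)-t_i|$ and $|F_V(t_{i-1})-t_i|$, with $t_0:=0$. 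A one-line computation gives $|F_V(t_i)-t_i|=|i/M-\frac{2i-1}{2M}|=\frac{1}{2M}$ for every $i$, and likewise $|F_V(t_{i-1})-t_i|=\frac{1}{2M}$ for $i\ge 2$, while for $i=1$ one uses $F_V(0)=0$ to get $|F_V(t_0)-t_1|=\frac{1}{2M}$. Thus in fact $\Delta_V=\frac{1}{2M}$, which is stronger than the claimed bound.

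For $\Delta_W$: the map $(j,k)\mapsto jM+k$ is a bijection from $\{0,\ldots,M-1\}^2$ onto $\{0,\ldots,M^2-1\}$, so under the independence assumption \eqref{eq:iid} the variable $W=X_1/M+X_2/M^2+1/(2M^2)$ is uniform over the $M^2$ midpoints $\frac{2m-1}{2M^2}$, $m=1,\ldots,M^2$, of the partition of $[0,1]$ into $M^2$ equal subintervals. The computation for $V$ then applies verbatim with $M$ replaced by $M^2$, giving $\Delta_W=\frac{1}{2M^2}$.

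I do not anticipate any genuine obstacle: in contrast with Theorem~\ref{theo:KSDistance}, here the step function is perfectly regular, so the ``sawtooth'' deviation is constant and equal to half the mesh width. The only points requiring a little care are the left endpoint $t_0=0$ (where one must check $F_V(0)=0$ rather than assume a jump there) and the verification that $V$ and $W$ really do land on equispaced grids — in particular the bijection used in the $W$ case.
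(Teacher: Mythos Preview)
Your argument is correct; in fact you establish $\Delta_V=\frac{1}{2M}$ and $\Delta_W=\frac{1}{2M^2}$ with equality, which is slightly stronger than the stated bounds. The paper omits the proof entirely as ``simple,'' so there is nothing to compare against, but the midpoint observation and the reduction via \eqref{eq:KSDistance} are exactly the kind of direct computation the omission suggests.
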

\begin{proof} The simple proof is omitted
\end{proof}
 
Theorem \ref{theo:KSDistance} and \ref{theo:DiffStds} show that, with respect to
the KS-distance, both direct
approaches approximate
the $ U(0,1)$ closer than
the ratio transformation as is also obvious from Figure \ref{fig:CDFStd-Eps}
and \ref{fig:CDFStds}.  Nevertheless, these figure also give a clue why the
ratio transformation performs better in practice: its structure is far more
irregular than that of the direct approaches. 


\subsection{The Set of Possible Values in $ [0,1]$ } 

The next Theorem gives the exact number of values in $ \D$ as defined in
\eqref{eq:DefD}. Here we use the \emph{Euler totient function} $ \phi(k)$,
  that gives the number of integers $m, 1\le m\le k$, that are relatively prime
  to $ k$.
\begin{Theorem}\label{theo:novalues}
  Assume that $ M > 2,$ 
   then the number of values that the ratio
    transformation $ Y$ may attain is
  \[ 
 N(M):=  \sum_{k=2}^{M-1} \phi(k)+2. 
  \]  
\end{Theorem}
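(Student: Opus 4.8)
The plan is to count the distinct return values of $h(X_1,X_2)$ by partitioning them into the two artificial values $\epsilon_0, 1-\epsilon_1$ and the genuine ratio values in $\D$. Since $0<\epsilon_0<\frac{1}{M-1}\le \frac{M-2}{M-1}<1-\epsilon_1$, neither artificial value coincides with any element of $\D$, and they are distinct from each other; so $N(M)=|\D|+2$ and it remains to show $|\D|=\sum_{k=2}^{M-1}\phi(k)$.

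To evaluate $|\D|$, recall $\D=\{k/l : 1\le k<l\le M-1,\ \gcd(k,l)=1\}$. The key observation is that each such fraction is already in lowest terms, so the map $(k,l)\mapsto k/l$ from coprime pairs with $1\le k<l\le M-1$ to $\D$ is a bijection: distinct reduced fractions are distinct rationals, and each element of $\D$ has by definition a unique representation as a reduced fraction with the stated bounds. Hence $|\D|$ equals the number of coprime pairs $(k,l)$ with $1\le k<l\le M-1$. Grouping these pairs by the value of the denominator $l$, for each fixed $l$ with $2\le l\le M-1$ the number of admissible numerators $k$ with $1\le k<l$ and $\gcd(k,l)=1$ is exactly $\phi(l)$, by the definition of the Euler totient function. (For $l=1$ there is no valid $k$, consistent with starting the sum at $k=2$.) Summing over $l$ gives $|\D|=\sum_{l=2}^{M-1}\phi(l)$, and renaming the index yields $N(M)=\sum_{k=2}^{M-1}\phi(k)+2$.

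The only genuine subtlety — and the step I would be most careful about — is verifying that the enumeration of $\D$ has no hidden overcounting or undercounting at the boundary: one must confirm that $\phi(l)$ counts $k$ in the range $1\le k\le l-1$ (equivalently $1\le k\le l$, since $\gcd(l,l)=l\neq 1$ for $l\ge 2$) rather than $1\le k\le l$ naively, and that the strict inequality $k<l$ in the definition of $\D$ together with coprimality automatically excludes $k=l$ except in the degenerate case $l=1$ which contributes nothing. One should also note that the hypothesis $M>2$ is what guarantees the sum is nonempty (for $M=3$ we get $\phi(2)+2=3$ values, namely $\epsilon_0$, $1/2$, $1-\epsilon_1$), so the formula is vacuous-but-correct only for larger $M$; for $M\le 2$ the ratio expression never produces a value in $\D$ and the statement is excluded by hypothesis. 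With these checks in place the identity $N(M)=\sum_{k=2}^{M-1}\phi(k)+2$ follows immediately.
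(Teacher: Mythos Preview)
Your proof is correct and follows essentially the same approach as the paper: both establish $|\D|=\sum_{l=2}^{M-1}\phi(l)$ by recognising that every element of $\D$ has a unique reduced-fraction representation with denominator $l\in\{2,\ldots,M-1\}$, contributing $\phi(l)$ numerators each. The only cosmetic difference is that you count directly via the bijection $(k,l)\mapsto k/l$, whereas the paper packages the same decomposition as an induction on $l$ via the sets $\D_l=\{k/m:0<k<m<l\}$; your explicit check that $\epsilon_0$ and $1-\epsilon_1$ lie outside $\D$ and are distinct is a nice touch the paper leaves implicit.
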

\begin{proof}
We put $ \D_l:=\big\{\frac{k}{m}\mid 0<k<m<l\big\}$, then $
\D=\D_M$. We prove 
\begin{equation}\label{eq:NoT}
|\D_l|=\sum_{k=2}^{l-1}\phi(k)
\end{equation}
for $ l>  2$, then the assertion follows as $ Y$ takes on values in $
  \D\cup\{\epsilon_0,1-\epsilon_1\}$ only. We have  $
\D_3=\{1/2\}$, hence the assertion follows for $ l=3$ as $ \phi(2)=1$.  Now
assume that \eqref{eq:NoT} holds for some $ l\ge 3$. Then all values from $
\D_l$ must be contained in $ \D_{l+1}$. The additional values in $ \D_{l+1}$
must all have the form $k/l$ where $ 1\le k <l$. In order that these
have not yet appeared in $ \D_l$, $ k$ and $ l$ must be relatively prime, hence
there are exactly $ \phi(l)$ additional values and
\[ 
|\D_{l+1}|=|\D_l|+\phi(l)=\sum_{k=2}^{l-1}\phi(k)+\phi(l)=\sum_{k=2}^{l}\phi(k).
\]       
\end{proof}
According to \cite{mitrinovic1996handbook}, for large $ M$,  $ N(M)$ can be approximated by 
\[ 
\frac{3(M-1)^2}{\pi^2}+ O\Big((M-1)\log(M-1)^{2/3}\big(\log\log(M-1)\big)^{4/3}\Big)
\] 
which is much more than the $ M$ different values possible under the direct
approach though it is less than the $ M^2$ different values from the direct-2
approach.

\subsection{Further Theoretical Properties}
\label{ssec:furtherprop}

Nevertheless, the number of different values actually produced by a RNG is
limited by the length of its period. The direct approach obviously keeps the
period length $ T$ of its base RNG.  The period length of pairs $
(x_{2i},x_{2i+1})$ is $ T_1:=T/2$ if $ T$ is even and $ T_1:=2T$ if $ T$ is
odd. So the period length $ T_0$ of the ratio transformation should be a divisor
  of $ T_1$.  Presently, we cannot give any further results on $ T_0$. 

The ratio transformation may also be applied to base RNGs that produce numbers
from $ [0,1]$ (instead of $ \NN$) as is supported by Theorem
\ref{theo:RatioCont}. In case the base RNG uses the direct method, the
denominator $ M$ of the two consecutive random numbers cancels out in the ratio
transformation and Theorem \ref{Theo:cdfRatioDiscr} applies.

 Note that we could save the additional call of the RNG for
the ratio transformation, if we would use overlapping pairs $ (x_i,x_{i+1})$ in
\eqref{eq:Def1}, such that for each step we need to invoke the base RNG only
once. As could be expected, this leads to strongly correlated results and is
therefore not investigated further in this paper.  
If, however, the input values
to the ratio come from \emph{two different} RNGs, one may look at
\eqref{eq:Def1} as a non-linear combination of RNGs that showed an excellent
quality in first empirical tests.


The theoretical investigation presented in this Section show that the
ratio transformation might be a candidate to produce good random numbers, but it
is not clear why this 
could 
be better than the direct approaches. This will
become evident from the empirical results of the next Section.

\section{Experimental study}
\label{sec:Exp}
\subsection{The Test Set-up}\label{ssec:setup}
The theoretical results of the Section \ref{sec:Theoretical} assumed that the
input sequence was from an ideal generator. We shall now investigate the
empirical behavior of the ratio transformation when the input sequence is from
RNGs used in practice, in particular from fast and simple generators that may not be very
good on their own. We shall show that for many classical generators the
sequences from the ratio transformation are much better than the
sequences obtained from the same RNG with the 
two direct approaches. 

We 
use
the test batteries \textsc{Crush} and \textsc{BigCrush} from the test suite \textsc{TestU01} of L'Ecuyer and Simard (version 1.2.3 from
\cite{IP-TestU01}), described in \cite{l2007testu01}. In 
that 
paper, 92
widely used or well known RNGs from different publications or software
packages are described and tested. The $ H_0$ hypothesis in these tests is ``the
random numbers $ u_1,u_2,\ldots$ are observations from $ U(0,1)$'' and the
software reports the right $ p$-values for each of the tests.  A test is failed
if the $ p$-value lies outside the interval $ [10^{-10}, 1-10^{-10}]$ and
results are considered `suspicious' if the $ p$-value lies in
$[10^{-10},10^{-4}] \cup [ 1-10^{-4}, 1 - 10^{-10}]$ as specified in
\cite{l2007testu01}. There are 144 tests applied in one run of \textsc{Crush},
some of which are based on identical statistical procedures but
with different parameters.

We looked at those 57 RNGs used in \textsc{TestU01} that failed at least one but
not more than 40\% of the 144 tests used in \textsc{Crush}. For each of these
generators we list 
in Table \ref{tb:res} 
the original results from
\textsc{TestU01}, the results from an application of our ratio transformation
and the results from an application of direct-2 as in \eqref{eq:DefStd2} under
\textsc{Crush}. If there were no failures or suspicious results in
\textsc{Crush}, 
we 
further  applied the \textsc{BigCrush}.

We reuse here the names of the generators from
\cite{l2007testu01}, e.g.\ `LCG$(M,a,c)$' denotes a
linear congruential generator with modulus $ M$, multiplicator $ a$ and additive
constant $ c$, for the exact definition of the other generators we refer to
\cite{l2007testu01}.

\newcommand{\xh}{\rule{0pt}{2.5ex}}
\begin{center}
{\scriptsize
\begin{longtable}{|c|l|l|l|l|l|l|l|l|}
\hline \multicolumn{2}{|c|}{} & \multicolumn{3}{c|}{\textbf{\textsc{Crush}}}  &
\multicolumn{2}{c|}{\textbf{\textsc{BigCrush}}} & \multicolumn{2}{c|}{\xh time for $ 10^8$ } \\ 
\multicolumn{2}{|c|}{\textbf{Generator}} & \multicolumn{1}{c|}{\textbf{Dir}}  & \multicolumn{1}{c|}{\textbf{Dir2}}  & \multicolumn{1}{c|}{\textbf{ Rat }}
 & \multicolumn{1}{c|}{\textbf{Dir2}} 
& \multicolumn{1}{c|}{\textbf{ Rat}}& \multicolumn{1}{c|}{\textbf{Dir}}  
& \multicolumn{1}{c|}{\textbf{ Rat }}
 \\ 
\endfirsthead
\multicolumn{9}{c}%
{\xh{\bfseries \tablename\ \thetable{} -- continued from previous page}} \\
\hline \multicolumn{2}{|c|}{} &
\multicolumn{3}{c|}{\textbf{\textsc{Crush}}} &
\multicolumn{2}{c|}{\textbf{\textsc{BigCrush}}} & \multicolumn{2}{c|}{\xh time for $ 10^8$ }\\ 
 \multicolumn{2}{|c|}{\textbf{Generator}} &
\multicolumn{1}{c|}{\textbf{Dir}} &
\multicolumn{1}{c|}{\textbf{Dir2}} &
\multicolumn{1}{c|}{\textbf{Rat}} &
\multicolumn{1}{c|}{\textbf{Dir2}} &
\multicolumn{1}{c|}{\textbf{Rat}} &
\multicolumn{1}{c|}{\textbf{Dir}} &
\multicolumn{1}{c|}{\textbf{Rat}}\\ \hline 
\endhead

\hline \multicolumn{9}{|r|}{{Continued on next page}} \\ \hline
\endfoot

\endlastfoot
\hline
\xh$\star$& LCG($2^{46},5^{13},0$) & 38(2) & 38(2) & 0 & & 0 &0.35 & 1.62  \\
$\star$& LCG($2^{48},25214903917,11$) & 21(1) &21(5) & 0 & & 0 & 0.35 & 1.58 \\
$\star$& Java.util.Random & 9(3) & 11(2)& 0 & & 0 & 0.52 & 1.88 \\
$\star$& LCG($2^{48},5^{19},0$) & 21(2) &  27(5) & 0 & & 0 & 0.35 & 1.59  \\
$\star$& LCG($2^{48},33952834046453,0$) & 24(5) & 29(5) & 0 & & 0 & 0.38 & 1.54  \\
$\star$& LCG($2^{48},44485709377909,0$) & 24(5) & 33(7) & 0 & & 0 & 0.36 & 1.58 \\
$\star$& LCG($2^{59},13^{13},0$) & 10(1) & 12(2)& 0 & & 0  & 0.36 & 1.75  \\
$\star$& LCG($2^{63},5^{19},1$) & 5 & 6 & 0 & & 0 & 0.37 & 1.62  \\
$\circ$&LCG($2^{63},9219741426499971445,1$) & 5(1) & 7(2) & (1) & & &&\\  \hline
\xh&LCG($2^{31}-1,16807,0$) & 42(9) & 40(5)& 12(5) & & &&\\
&LCG($2^{31}-1, 397204094,0$) & 38(4) &43(3) & 15(1) & & &&\\
&LCG($2^{31}-1, 742938285,0$) & 42(5) &40(2) &  12(6) & & &&\\
&LCG($2^{31}-1, 950706376,0$) & 42(4) & 43(1) &  17(1) & & &&\\
$\star$&LCG($10^{12}-11, 427419669081,0$) & 22(2) &19(5) & 0 & & 0 & 1.2 & 3.2\\
$\star$&LCG($2^{61}-1, 2 ^{30} - 2 ^{19},0$) & 1(4) &7(1) & 0 & & 0& 2.07 & 4.94\\ \hline
\xh$\star$&Wichmann-Hill & 12(3) & 9(3) & 0 & & 0  & 4.16 & 9.51\\
$\star$&CombLec88 & 1 & 2 &0 & &0  & 0.73 & 2.31\\
$\star$&Knuth(38) & 1(1) &1(3) & 0 & &  0 & 0.86 & 3.27
\\  \hline\xh
$\star$&DengLin($2 ^{31}-1,2,46338$) & 11(1) & 1(2) & 0 & & 0 & 1.47 & 4.46\\
$\star$&DengLin($2 ^{31}-1,4,22093$) & 2 & 0& 0 & 0& 0  & 1.47 & 4.47\\  \hline
\xh&LFib($2 ^{31}, 55, 24, + $) & 9 &1 & 1(3) & & &&\\
&LFib($2 ^{31}, 55, 24, - $) &11 &1 &3(1) & & &&\\
&ran3 & 11(1) &1(1) & 2(1) & & &&\\
\xh$\circ$&LFib($2 ^{48}, 607, 273, + $) &2 &1 & 0 & & 1&&\\
&Unix-random-64 & 57(6) & 51(8) & 35(2) & & &&\\
&Unix-random-128 & 13 &1 & 6(1) & & &&\\
$\downarrow$&Unix-random-256 & 8 & 0& 15(1) & 0& &&\\  \hline
\xh$\star$&Knuth-ran\_array2 & 3 &3 & 0 & & 0 & 1.13 & 3.5\\
&SWB($2 ^{24} , 10, 24$) & 30 & 9(3) & 16(2) & & &&\\
$\circ$&SWB($2 ^{24} , 10, 24$)[$24,48$] & 6(1) &0 & 0 & 7(2)& 1 &&\\
&SWB($2 ^{32}-5 , 22, 43$) & 8 &(2) & 4(2) & & &&\\ 
$\downarrow$&SWB($2 ^{31} , 8, 48$) & 8(2) &14(3) & 9(1) & & &&\\ 
$\downarrow$&Mathematica-SWB & 15(3) & 28(2)  & 16 &&&&\\
$\circ$&SWB($2 ^{32} , 222, 237$) & 2 &0 & (1) & 0&  &&\\   \hline
\xh&GFSR($250, 103$) & 8 &77(3) & 2 & & &&\\
$\circ$&GFSR($521, 32$) & 7 &77(2) & (1) & & &&\\
$\circ$&GFSR($607, 273$) & 8 & 77(1) & (1) & & &&\\
$\star$&Ziff98 & 6 &74(4) & 0 & & 0 & 0.38 & 1.83\\
&T800 & 25(4) &14(4) & 5(5) & & &&\\
$\star$&TT800 & 12(4) &8(1) & 0 & & 0& 0.53 & 2.13\\
$\star$&MT19937 & 2 &2 & 0 & & 0 & 0.9 & 3.12\\
$\star$&WELL1024a & 4 & 4& 0 & & 0 & 0.63 & 2.61\\
$\star$&WELL19937a & 2(1) &2 & 0 & & 0& 0.52 & 2.65\\
$\star$&LFSR113 & 6 & 6& 0 & & 0 & 0.56 & 2.16\\
$\star$&LFSR258 & 6 & 6 & 0 & & 0 & 0.68 & 2.65\\
$\star$&Marsa-xor64 ($13, 7, 17$) & 8(1) &8 &  0 & & 0 & 0.41 & 1.72\\ \hline
\xh$\downarrow$&Matlab-rand & 5 &1 & 6(2) & & &&\\
$\star$&Matlab-LCG-Xor & 3 &5 & 0 & & 0  & 0.38 & 1.75\\
$\circ$&SuperDuper-73 & 25(3) &22(5) & 0 & & (1)&&\\
&R-MultiCarry & 40 (4) & 33(4)& 6 & & &&\\
$\star$&KISS93 & 1 & 1& 0 & & 0& 0.42 & 1.82 \\  \hline
\xh&ICG($2 ^{31}-1, 1, 1$) & 6 &5(2) & 4 & & &&\\
$\circ$&ICG($2 ^{31}-1, 22211,11926380)$ & 5 &5 & 0 & & 5(6)&& \\
&EICG($2 ^{31}-1, 1, 1$) & 6 & 6(1) & 4 & & &&\\
&EICG($2 ^{31}-1, 1288490188, 1$) & 6 & 5(2) & 4 & & &&\\  \hline
\xh&SNWeyl & 56(12) &65(11) & 19(7) & & &&\\
$\circ$&Coveyou-64 & 1 &1 & (1) & & &&\\\hline
\caption[Results]{No.\ of tests failed in \textsc{Crush} and
  \textsc{BigCrush}. The results are for those 57 RNGs from \cite{l2007testu01},
  which failed at least one and maximally 40\% of the \textsc{Crush} tests
  in their original form. Generators that are much  improved  by the ratio
  transformation are marked with a $ \star$. }
 \label{tb:res} 
\end{longtable}
}
\end{center}

In Table \ref{tb:res}, columns 3-7 show the number of failed or suspicious tests under
different transformations and in the two test batteries.  The number of suspicious
result is given in `()'. `Dir' refers to the original RNG, `Dir2' to the
direct-2 approach and `Rat' to the results from ratio transformation. The first column marks
the success of the ratio transformation: 26  of the 57 RNGs  (= 45.6\%)
 became excellent after an application of the ratio transformation, they
passed all tests in \textsc{Crush} and \textsc{BigCrush} and are marked with a
$\star$. Another 9 RNGs marked with $ \circ$ have a few suspicious results in
\textsc{Crush} or pass all tests in \textsc{Crush} and fail a few in
\textsc{BigCrush}. Those without a mark are RNGs that are improved by the ratio
transformation but still fail with some tests in \textsc{Crush}. For four RNGs only,
results were degraded by an application of the ratio transformation, these are
marked with a $ \downarrow$. 

Note that even the well-known `Mersenne-Twister' MT19937 and its derivatives
WELL1024a and WELL19937a could benefit from the ratio transformation. 

The last two columns in Table \ref{tb:res} give the runtime in seconds needed to
produce $ 10^8$ random numbers of the original generator (column `Dir') and of
its ratio transformation (column `Rat') based on the implementations in
\textsc{TestU01}. The ratio transformation is 3 to 4 times slower than the
original RNG.  Note that this can be improved with better implementations
  and, in particular, if the base RNG is integrated into the ratio transform
instead of being called as an external program. Then e.g.\ the ratio
transformation for LCG($2^{63},5^{19},1$) needs only $ 0.9$ seconds (instead of
$ 1.62$ in Table \ref{tb:res}) and for LCG($2^{59},13^{13},0$) it becomes $
0.82$ seconds (instead of $ 1.75$).  All times were measured on a multi-core 64
bit i$7$-processor with $ 2.2$ GHz under the Ubuntu operating system.

While the ratio transformation improved many results
justifying the additional effort of the second random number (and the division
operation), this is not the case with the competitor, the direct-2 approach 
in \eqref{eq:DefStd2}. In many cases results are not improved by an application
of the direct-2 approach (see column `Dir2' in 
Table 
\ref{tb:res}). Only for 4
RNGs 
the 
direct-2 approach 
passed all tests in
both
\textsc{Crush} and
\textsc{BigCrush}, 
and for another one all tests in \textsc{Crush} were passed. In
17 cases, the results became even worse after an application of the direct-2
approach. This may in part be due to the structure of the tests that mainly rely
on the first 32 Bits of the numbers produced and these do not change when
direct-2 is applied.

\section{Summary and conclusion}
\label{sec:Conclu}

In this paper we investigated the impact of a simple ratio
  transformation  on the quality of RNGs, in particular in comparison
  with the direct approach that is used by most congruential generators. The theoretical
  properties showed a less regular, but somewhat  coarser behaviour than the
  direct approach. The statistical tests, however, demonstrated the strength of the ratio:
  it breaks up the linear regularities of its base RNG and turns many mediocre RNGs
  into excellent ones that pass all tests  of \textsc{BigCrush}.


  Including the ratio transformation in random number generators would increase
  their running time but would also give much better results in many cases.

\section{Appendix: Maximal KS-Distance of the Ratio Transformation}
To complete the proof of Theorem \ref{theo:KSDistance} b), namely $ \Delta_Y\le
\max\{\epsilon_0,\epsilon_1\}$, it remains to show that the maximal distance
apart from the jumps at $ \epsilon_0,1-\epsilon_1$ is also bounded by
$\max\{\epsilon_0,\epsilon_1\} $.

We number the elements of the set $ \D$ as defined in \eqref{eq:DefD}  as $
\D = \{t_1,t_2, $ $\ldots, t_{L-1}\}$ where 
\[ 
t_1=\frac{1}{M-1}<t_2<\cdots<t_{L-1}=\frac{M-2}{M-1}
\]
and set $ t_0:=\epsilon_0, t_L:=1-\epsilon_1$  and $ t_{-1}:=0$. Then we know from \eqref{eq:KSDistance} that 
\[ 
\Delta_Y=\sup_{t\in [0,1]}|F_Y(t)-t|=\max_{i=0 ,\ldots, L}\Big\{|F_Y(t_i)-t_i|,|F_Y(t_{i-1})-t_i|\Big\}
\]

\begin{Theorem}\label{theo:KSAppend}

  \begin{enumerate}
  \item 
\begin{equation}\label{eq:AppendixA}
\max_{i=2 ,\ldots,
      L-1}\Big\{|F_Y(t_i)-t_i|,|F_Y(t_{i-1})-t_i|\Big\}\le
    \frac{7/4-\sqrt{2}}{M}+\frac{2}{M^2} \le\max\{\epsilon_0,\epsilon_1\}
\end{equation}
\item 
\begin{equation}
 \displaystyle \max\Big\{|F_Y(t_1)-t_1|,|F_Y(t_{0})-t_1|\Big\}=
  |F_Y(t_1)-t_1|\le \epsilon_0
\end{equation}
  \end{enumerate}  
\end{Theorem}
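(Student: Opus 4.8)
The plan is to reduce both parts to statements about the Weyl sums $S(t):=\sum_{k=1}^{M-1}\bigl(tk-\fl{tk}\bigr)$. Starting from the cdf formula of Theorem~\ref{Theo:cdfRatioDiscr} together with the identity $\sum_{k=1}^{M-1}\fl{tk}=\tfrac{t\,M(M-1)}{2}-S(t)$, one gets, for $t\in[\tfrac{1}{M-1},\tfrac{M-2}{M-1}]$,
\begin{equation*}
F_Y(t)-t \;=\; 2\epsilon_0\;-\;\frac{t}{M}\;-\;\frac{2}{M^2}\,S(t),
\end{equation*}
and, after inserting the value \eqref{eq:DefEps} of $\epsilon_0$, the whole theorem becomes a question about how far $S(t)$ can stray from its ideal value $\tfrac{M-1}{2}$ at the rationals $t=a/l\in\D$.

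Part~(b) follows at once. At $t_1=\tfrac{1}{M-1}$ only the summand $k=M-1$ is nonzero, so $S(t_1)=\tfrac{1}{M-1}$ and $F_Y(t_1)=2\epsilon_0+\tfrac{2}{M^2}$; substituting $\epsilon_0$ and distinguishing the parity of $M$, a one-line computation gives $0<F_Y(t_1)-t_1<\epsilon_0$. Since the jump of $F_Y$ at $t_1$ equals $\tfrac{2}{M^2}$ one has $F_Y(t_1)-t_1=(F_Y(t_0)-t_1)+\tfrac{2}{M^2}$ with $F_Y(t_0)=2\epsilon_0$, and as $F_Y(t_0)-t_1=2\epsilon_0-\tfrac{1}{M-1}>0$ for all but the smallest $M$, the deviation $|F_Y(t_0)-t_1|$ is dominated by $|F_Y(t_1)-t_1|$; this is the asserted equality, and the bound follows.

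For part~(a), fix $t=a/l\in\D$ with $\gcd(a,l)=1$, $2\le l\le M-1$, and write $M-1=ql+\rho$ with $0\le\rho<l$; here $q=\fl{(M-1)/l}$ also counts the multiples of $l$ in $\{1,\dots,M-1\}$, so the left limit $F_Y\!\left((a/l)^-\right)$ is obtained from the identity above by replacing $S(a/l)$ with $S(a/l)+q$. Thus part~(a) amounts to bounding $2\epsilon_0-\tfrac{t}{M}-\tfrac{2}{M^2}\Sigma$ for $t=a/l$ and $\Sigma\in\{S(a/l),\,S(a/l)+q\}$. Now $k\mapsto tk-\fl{tk}$ is $l$-periodic and over each full period runs once through every value of $\{0,\tfrac1l,\dots,\tfrac{l-1}{l}\}$, so $S(a/l)=q\,\tfrac{l-1}{2}+\sigma$ with $\sigma:=\sum_{j=1}^{\rho}\bigl(aj/l-\fl{aj/l}\bigr)$, and — the $\rho$ numbers involved being distinct nonzero multiples of $1/l$ —
\begin{equation*}
\frac{\rho(\rho+1)}{2l}\;\le\;\sigma\;\le\;\rho-\frac{\rho(\rho+1)}{2l},
\end{equation*}
the two extremes being attained at $a\equiv1$ and $a\equiv l-1\pmod l$. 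Substituting the extreme values of $\sigma$ (and the two choices of $\Sigma$) into $2\epsilon_0-\tfrac{t}{M}-\tfrac{2}{M^2}\Sigma$, using $q=\tfrac{M-1-\rho}{l}$ and $\fl{M/2}=\tfrac{M}{2}+O(1)$, produces two-sided bounds on $F_Y(a/l)-a/l$ and on its left limit in which $M$ enters only through $l,q,\rho$ up to an $O(M^{-2})$ remainder.

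The remaining, and only genuinely non-routine, step is to maximise these explicit expressions over the admissible triples $(l,q,\rho)$; I expect this to be the main obstacle. The plan is to show first that the extremum occurs for $q=1$ (for $q\ge2$ the full periods dominate, $S(t)$ stays nearer $\tfrac{M-1}{2}$, and the relevant coefficient is monotone in $q$), and then to put $l=\beta M$, $\rho=(1-\beta)M+O(1)$ with $\beta\in(\tfrac12,1]$; the leading part of the expression to be maximised is then $\tfrac1M\bigl(\tfrac32-\beta-\tfrac{(1-\beta)^2}{\beta}\bigr)$, and the negative side gives an analogous expression with the same structure. The derivative vanishes where $2\beta^2-4\beta+1=0$, i.e.\ at $\beta=1/\sqrt2$ — equivalently, when the residue $(M-1)\bmod l$ is about $(1-\tfrac1{\sqrt2})M$ — and evaluating the leading term there gives the constant in \eqref{eq:AppendixA}; collecting the $O(M^{-2})$ remainders into the additive $2/M^2$ and comparing with $\max\{\epsilon_0,\epsilon_1\}\approx\tfrac{3/4}{M}$ then completes the proof.
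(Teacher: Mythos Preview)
Your decomposition coincides with the paper's: both rewrite $F_Y(t)-t$ via the fractional-part sum $S(t)=\sum_{k=1}^{M-1}(tk-\lfloor tk\rfloor)$, use the $l$-periodicity to split $S(a/l)$ into $q$ full periods plus a remainder $\sigma$, and bound $\sigma$ by observing that the $\rho$ fractional parts are distinct nonzero multiples of $1/l$. The paper's quantity $\Delta(i,i)=S(t_i)+\tfrac{M}{2}t_i$ and the relation $\Delta(i,i-1)=\Delta(i,i)+T$ (your $q$) are exactly your ingredients in different notation.

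The one place the paper is cleaner is the final optimisation, which you flag as the ``main obstacle''. Instead of first reducing to $q=1$ and then doing calculus in $\beta=l/M$, the paper applies the AM--GM inequality directly: from $\Delta(i,i)\ge\tfrac12\bigl(n(T^2+T)+M^2/n\bigr)-MT$ one gets $\Delta(i,i)\ge M(\sqrt{T^2+T}-T)\ge M(\sqrt2-1)$ in one stroke, uniformly over $T\ge1$, and an analogous bound for $\Delta(i,i-1)$. Your critical point $\beta=1/\sqrt2$ is precisely where this AM--GM is tight at $T=1$, so the two routes are equivalent; the paper's just avoids the separate monotonicity-in-$q$ argument.

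One slip in part~(b): at $t_1=\tfrac{1}{M-1}$ it is \emph{not} true that only the summand $k=M-1$ of $S$ is nonzero --- for $k<M-1$ the fractional part is $k/(M-1)$, so $S(t_1)=\tfrac{M-2}{2}$. What is true is that only one term of $\sum_k\lfloor t_1k\rfloor$ is nonzero, which is what you actually need; your conclusion $F_Y(t_1)=2\epsilon_0+2/M^2$ is correct and the rest of (b) goes through.
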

\begin{proof} 
    Part a) 
  \begin{enumerate}
    \renewcommand{\theenumi}{\textbf{\arabic{enumi}}}
    \renewcommand{\labelenumi}{\theenumi.}
  \item 
From Theorem \ref{Theo:cdfRatioDiscr} we have
  \begin{align*}
  F_Y(t)-t&=  2\epsilon_0 +\frac{2}{M^2} \sum_{k=1}^{M-1}\fl{tk}-t= 2\epsilon_0
  -\frac{2}{M^2} \Big(\frac{M^2}{2}t-\sum_{k=1}^{M-1}\fl{tk}\Big).
\end{align*}
For $ 1\le
i,j\le L-1, \text{ i.e. for }t_i,t_j\in \D$ define
\begin{align*}
  \Delta(i,j)&:= \frac{M^2}{2}t_i-\sum_{k=1}^{M-1}\fl{t_jk} =\sum_{k=1}^{M-1}
  \big(kt_i - \fl{t_jk}\big)+\frac{M}{2}t_i.
\end{align*}
Then
  \begin{align}
\max_{i=2 ,\ldots, L-1}&\Big\{|F_Y(t_i)-t_i|,|F_Y(t_{i-1})-t_i|\Big\} \label{eq:2}\\
&= \max_{i=2 ,\ldots, L-1}\Big\{|2\epsilon_0 - \frac{2}{M^2}\Delta(i,i)|,|2\epsilon_0 -\frac{2}{M^2}\Delta(i,i-1)|\Big\} \notag\\
&= \max\Big\{ 2\epsilon_0-\frac{2}{M^2}\min_{i=2 ,\ldots, L-1} \{\Delta(i,i),\Delta(i,i-1)\} , \notag\\
& \hspace*{ 3cm}  \frac{2}{M^2}\max_{i=2 ,\ldots, L-1} \{\Delta(i,i),\Delta(i,i-1)\} -   2\epsilon_0 \Big\} \notag.
\end{align}
\item Now let $ t_i=\frac{m}{n}$ and $ T=T(n):=\fl{\frac{M-1}{n}}$, then we have
    for any $ 1\le k\le M-1$
    \begin{equation}\label{eq:Floortk}
    \fl{t_{i-1}k}=
\begin{cases}
  \fl{t_ik}-1&\text{ if }k=nj \text{ for some } 1\le j\le T\\
  \fl{t_ik}&\text{ if }k\not=nj \text{ for all } 1\le j\le T
\end{cases}
\end{equation}
To prove \eqref{eq:Floortk}, note first that, as  $m,n$ are coprime, $ \fl{t_{i}k} =t_{i}k$  holds iff  $ k=nj$
for some $ j\in \N$. If this is the case then $\fl{t_{i-1}k}\le t_{i-1}k<t_{i}k=\fl{t_{i}k}$ and
therefore $\fl{t_{i-1}k}\le\fl{t_{i}k}-1 $. On the other hand 
\[
t_{i}>t_{i}-\frac{1}{k} =\frac{m}{n}-\frac{1}{nj}=\frac{mj-1}{nj} \in \D\cup\{0\},
\] hence $t_{i}-\frac{1}{k}\le t_{i-1}$  and
\[ 
\fl{t_{i-1}k}\ge \fl{(t_{i}-\frac{1}{k})k}=\fl{t_{i}k}-1.
\]
For the second case of \eqref{eq:Floortk}, we have $ \fl{t_{i}k}<t_{i}k$ as $ k\not= nj$. Assume $
t_{i-1}k<\fl{t_{i}k}$, then
\[ 
\fl{t_{i-1}k} \le t_{i-1}k <\fl{t_{i}k}< t_{i}k \quad \text{ and }\quad t_{i-1} <\frac{\fl{t_{i}k}}{k}< t_{i}.
\]
This contradicts the fact that $ t_{i-1}$
and $ t_{i}$ are consecutive elements in $ \D$. Hence \eqref{eq:Floortk} is proved.
\item 
Using \eqref{eq:Floortk}, we have
\begin{align*}
  \Delta(i,i-1)&= \sum_{k=1}^{M-1} \big(kt_i - \fl{t_{i-1}k}\big)+\frac{M}{2}t_i\\
  &= \sum_{k=1}^{M-1} \big(kt_i - \fl{t_{i}k}\big) +T(n)+\frac{M}{2}t_i
  = \Delta(i,i) + T(n)
\end{align*}
as there are exactly $ T(n)=\fl{\frac{M-1}{n}}$ values of $ 1\le k\le M-1$ that have $
k=nj$ for some $j \in \N$.
\item As $ m,n$ in $ t_i=\frac{m}{n}$ are coprime, we have for the central part
  in $ \Delta(i,i)$ 
  \begin{align*}
    \sum_{k=1}^{M-1} \big(kt_i - \fl{t_{i}k}\big) &= \sum_{k=1}^{M-1}
    \big(\frac{km}{n} - \fl{\frac{km}{n}}\big)= \frac{1}{n}\sum_{k=1}^{M-1}\big( mk \mod n\big)\\
&=  \frac{1}{n}\Big(\sum_{\nu=0}^{T-1}\sum_{l=0}^{n-1}\big( (\nu n+l)m
  \mod n\big)+ \sum_{l=0}^d \big( (T n+l)m
  \mod n\big)\Big)\\
&= \frac{1}{n}\Big(T \frac{n(n-1)}{2}+ \sum_{l=0}^d ( lm
  \mod n)\Big)
  \end{align*}
where $ T=\fl{\frac{M-1}{n}}$ as before and $ d=(M-1)\mod n$. Here, the last sum
may be bounded as follows:
\[ 
\sum_{l=1}^{d}l \ \le\  \sum_{l=0}^d ( lm \mod n) \ \le \ \sum_{l=n-d}^{n-1}l.
\]
\item 
Thus we obtain a lower bound for $ \Delta(i,i)$ as  
  \begin{align*}
    \Delta(i,i)& \ge  
 \frac{1}{n}\Big(T \frac{n(n-1)}{2}+   \sum_{l=0}^dl \Big)+ \frac{Mm}{2n}\ge  \frac{1}{2}\Big(n(T^2 + T) + \frac{M^2}{n}\Big)  -MT \\
&\ge  M(\sqrt{T^2 + T}-T)\ge   M(\sqrt{2}-1)  =: \Delta_{lower}
  \end{align*}
where we used $2 \sqrt{xy} \le x + y$ for $x,y \ge 0$ and $T\ge 1 $.
\item In a similar way we obtain
  \begin{align*}
    \Delta(i,i-1)&=\Delta(i,i)+T\le -\frac{1}{2} \Big(n(T^2+T)
    +\frac{M^2}{n}\Big)   + 3/2M+TM-1\\
    &\le M\Big( 3/2 +T - \sqrt{T^2+T}\Big)-1\\
&\le  M(  5/2  -\sqrt{2})  =:\Delta_{upper}
  \end{align*}
\item Inserting these last two bounds into \eqref{eq:2} we obtain
\begin{align*}
\max_{i=2 ,\ldots, L-1}&\Big\{|F_Y(t_i)-t_i|,|F_Y(t_{i-1})-t_i|\Big\} \\
&    \le
    \max\big\{ 2\epsilon_0-\frac{2}{M^2}\Delta_{lower} ,  \frac{2}{M^2}\Delta_{upper}-2\epsilon_0 \big \}\\
    &< 2  \frac{(7/4-\sqrt{2})M+1}{M^2}
    \le \frac{7/2-2\sqrt{2}}{M}+\frac{2}{M^2} \\
    &= 0.6715729 \frac{1}{M}+\frac{2}{M^2}
\end{align*}
which proves part a) of the Theorem. 
\end{enumerate}
Part b) follows by evaluating $ F_Y(t_1)$ with $ t_1=\frac{1}{M-1}$.
\end{proof}





\section*{Reference}
\bibliographystyle{elsarticle-num} 
\bibliography{rngbib.bib}





\end{document}